\DeclareMathOperator*{\E}{\mathbb{E}}
\def\BibTeX{{\rm B\kern-.05em{\sc i\kern-.025em b}\kern-.08em
		T\kern-.1667em\lower.7ex\hbox{E}\kern-.125emX}}
\newtheorem{theorem}{Theorem}
\begin{document}	
	\title{Learning Hard Distributions with Quantum-enhanced Variational Autoencoders\\
  }
\author{\IEEEauthorblockN{Anantha S Rao}
		\IEEEauthorblockA{\textit{Department of Physics } \\
			\textit{Indian Institute of Science Education and Research Pune}\\
			Pune, India \\
			anantha.rao@students.iiserpune.ac.in}
		\and
		\IEEEauthorblockN{Dhiraj Madan}
		\IEEEauthorblockA{\textit{IBM Research} \\
			India \\
			dmadan07@in.ibm.com}
		\and
		\IEEEauthorblockN{Anupama Ray}
		\IEEEauthorblockA{\textit{IBM Research} \\
			India \\
			anupamar@in.ibm.com}
		\and
		\IEEEauthorblockN{Dhinakaran Vinayagamurthy}
		\IEEEauthorblockA{\textit{IBM Research} \\
			India \\
			dvinaya1@in.ibm.com }
		\and
		\IEEEauthorblockN{M.S.Santhanam}
		\IEEEauthorblockA{\textit{Department of Physics} \\
			\textit{Indian Institute of Science Education and Research Pune}\\
			Pune, India \\
			santh@iiserpune.ac.in}
	}
	\maketitle
	\begin{abstract}

An important task in quantum generative machine
learning is to model the probability distribution of measurements
of quantum mechanical systems. Classical generative models,
such as generative adversarial networks (GANs) and variational
autoencoders (VAEs), can learn the distributions of product
states with high fidelity, but fail or require an exponential number
of parameters to model entangled states. In this paper, we introduce
a quantum-enhanced VAE (QeVAE), a generative quantum-classical
hybrid model that uses quantum correlations to improve
the fidelity over classical VAEs, while requiring only a linear
number of parameters. We provide a closed form expression
for the output distributions of the QeVAE. We also empirically
show that the QeVAE outperforms classical models on several
classes of quantum states, such as 4-qubit and 8-qubit quantum
circuit states, haar random states, and quantum kicked rotor
states, with a more than 2x increase in fidelity for some states.
Finally, we find that the trained model outperforms the classical
model when executed on the IBMq Manila quantum computer.
Our work paves the way for new applications of quantum
generative learning algorithms and characterizing measurement
distributions of high-dimensional quantum states.

	\end{abstract}
	
\section{Introduction}
Research in quantum information science promises to enable the development of a fault-tolerant quantum computer that can perform certain tasks faster and more efficiently than any classical computer. To this end, several scientists have demonstrated that quantum algorithms can, in theory, outperform the best-known conventional algorithms when tackling specific problems and, in some situations, deliver a ‘quantum speedup’\cite{Biamonte2017, Aaronson2015}. For instance, certain quantum algorithms can take exponentially fewer resources for tasks such as factorization and eigenvalue decomposition, and quadratically fewer resources to search through unsorted databases \cite{shor1994algorithms, harrow2009quantum, grover1996fast}. This pursuit of ‘quantum speedup’ has motivated generations of physicists and engineers to discover novel algorithms that leverage the properties of superposition, entanglement and interference. 
	
Through advances in processing power and algorithmic ability of computing devices, machine learning techniques have evolved into fundamental tools for detecting patterns in data. Theoretically, models like deep neural networks have the potential to learn some of the most complex patterns that exist in nature or human-made systems and have been demonstrated to be highly competent at complex tasks like playing Go, identifying protein structures, and self-driving cars~\cite{Silver2017, Jumper2021, selfdrivingcars_2022}. However, many tasks are still intractable or very expensive for these methods. Some learning tasks, for example, include sampling from complex distributions or estimating the average values of numerous parameters under a complicated distribution, both of which are typically intractable (requiring exponential time or space resources). Moreover, certain distributions derived from quantum-mechanical systems are fundamentally intractable to conventional approaches~\cite{Bremner2010, fefferman2015power,Sweke2021quantumversus}. Enhancing and augmenting classical machine-learning methods using quantum correlations has been the focus of quantum-enhanced machine learning and our work. 
	

One of the important tasks for machine learning is that of generative learning, that involves modeling or learning a distribution given independent samples from the same. Previously, models such as Generative Adversarial networks (GANs~\cite{gans}) and Variational Autoencoders (VAEs~\cite{vae}) have been used to learn distributions of classical data such as texts and images.

 In this paper, we study the problem of modeling the measurement distributions obtained from unknown quantum states. This problem is fundamental in quantum information science, as it can reveal useful information about the properties and dynamics of quantum systems. Moreover, it can enable applications such as quantum state reconstruction, and entanglement quantification~\cite{Huang2022, Guillermo2021, Rocchetto2019, Huang2020}. These applications can help us understand and manipulate quantum systems in various fields, such as chemistry, materials science, and cybersecurity.

For our problem setting, Variational Autoencoders have also been shown to learn and compress the measurement distribution obtained from product states, termed as `easy states' but require an exponential number of parameters to learn the distribution of Haar random states (`hard states') \cite{rocchetto2018learning, niu2020learnability}. In this work, we propose a quantum-enhanced VAE (QeVAE) which incorporates quantum correlations through quantum circuits to enhance the performance of classical VAEs. We show that a QeVAE constructed by substituting the decoder (generator) of a conventional VAE with a parameterized quantum circuit can learn these `hard states' with only a linear number of parameters in the number of qubits and outperform conventional methods when the dataset contains quantum-correlations.
	
In this paper, we make three main contributions to the field of quantum generative learning. Firstly, we propose the Quantum-enhanced VAE (QeVAE), which can enhance the expressive power of classical VAEs and produce distributions that are classically intractable. Secondly, we provide a mathematical closed-form expression to theoretically analyze the class of models that QeVAE can model better than classical models. Finally, we demonstrate experimentally that our QeVAE outperforms classical VAEs on modeling measurement samples of different classes of quantum states, such as haar random states, quantum circuit states, and quantum kicked rotor states, and report an increase in fidelity by more than 2x for an 8-qubit quantum circuit state.
	
Our work is organized as follows: In Section \ref{sec:background}, we discuss the background work on generative learning, variational autoencoders, and the problem of learning measurement distributions of quantum states. In Section \ref{sec:QeVAE}, we propose the QeVAE and mathematically characterize the output distribution of the model. We describe our experiments and results in section \ref{sec:methods} and \ref{sec:results} respectively. Finally, in Section \ref{sec:Conlcusion}, we discuss our conclusions and future outlook in the greater context of developing quantum algorithms for generative learning.

\section{Background and Related work}
\label{sec:background}

\subsection{Generative learning and Variational Auto Encoders}
The task of generative modeling involves modeling a parameterized distribution $p_\theta(\mathbf{x})$, given independent and identically distributed (iid) samples from the distribution $\{\mathbf{x_i}\}_{i=1}^m$ where each $\mathbf{x_i} \sim p(\mathbf{x})$.
Here the goal is to maximize the log likelihood of the data given by $L (\theta) = \sum_i \log{p_\theta(\mathbf{x_i})}$.
 One of the common approaches for generative modeling is that of adversarial learning (for example, GANs, \cite{gans}), which involves learning a parameterized generative network and a parameterized discriminative network. While the generative network maps samples from a fixed distribution to the data distribution, the discriminative network distinguishes between data samples and the generated samples.

On the other hand, variational learning involves modeling the distribution through a latent vector $\mathbf{z}$. One can define a prior on the latent variables $\mathbf{z} \sim p(\mathbf{z})$ and a parameterized likelihood function $p_\theta(\mathbf{x} \vert \mathbf{z})$.
The joint distribution over the variables $\mathbf{x}$ and $\mathbf{z}$ can then be written as $p(\mathbf{x},\mathbf{z})= p(\mathbf{z}) p_\theta (\mathbf{x} \vert \mathbf{z})$.
The log likelihood can then be expressed as $L(\theta)=\log(p_\theta(\mathbf{x})) = \log(\sum_{\mathbf{z}} p(\mathbf{x}, \mathbf{z}))$. Variational learning considers a lower bound for the above as:-

\begin{equation}
    \log(p_{\theta}(\mathbf{x})) \geq \E_{\mathbf{z} \sim q(\mathbf{z})} \left(\frac{\log(p(\mathbf{x},\mathbf{z}))}{q(\mathbf{z})}\right)
\end{equation}

which holds for any distribution $q(\mathbf{z})$. The above lower bound called Evidence Lower Bound (ELBO) holds tight when $q(\mathbf{z}) = p (\mathbf{z} \vert \mathbf{x})$. Variational Autoencoders \cite{vae} seek to maximize the ELBO where $q_\phi(\mathbf{z} \vert \mathbf{x})$ can be defined through a parameterized neural network also called the encoder. Here we  can split the ELBO as 
\begin{equation}
\text{ELBO} = \E_{q(\mathbf{z}\vert\mathbf{x})}\left(\log p(\mathbf{x} \vert \mathbf{z})\right) -  KL (q(\mathbf{z}\vert \mathbf{x}) \vert \vert p(\mathbf{z})))
\end{equation}

\begin{figure}[h!]
    \centering
    \includegraphics[width=0.8\linewidth]{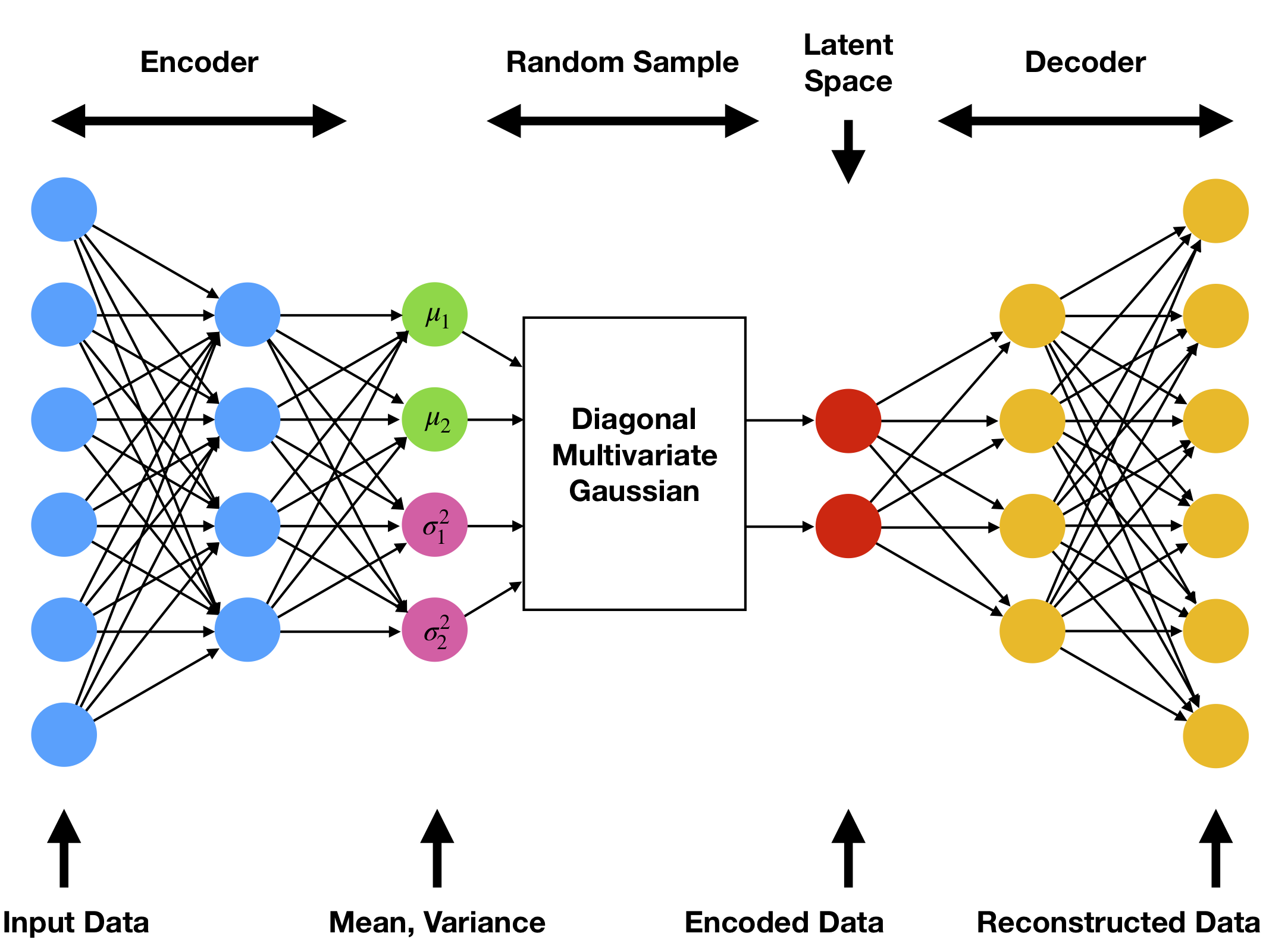}
    \caption{\textbf{A schematic for a classical VAE.} A classical VAE consists of an encoder, decoder and a continuous latent dimension that can be modeled as a multivariate gaussian with diagonal covariance. The dataset \textbf{x} is mapped to a distribution p(z) and after training random samples \textbf{x} can be generated by sampling from the decoder.}
    \label{fig:classical_vae}
\end{figure}
 The algorithm involves using the posterior network (encoder) to create the distribution $q_\phi(\mathbf{z} \vert \mathbf{x})$ and provides samples $\mathbf{z}$, which are then passed through the likelihood network (decoder) to produce a distribution over $\mathbf{x}$ conditioned on sampled vector $\mathbf{z}$. One then seeks to maximize the difference of log likelihood of generated distribution $p(\mathbf{x} \vert \mathbf{z})$ and the KL divergence between posterior ($q(\mathbf{z} \vert \mathbf{x})$) and prior distributions  $p(\mathbf{z})$. The first term (reconstruction term) tries to maximize the likelihood of recovering back $\bf{x}$ from the latent variable. The second term is a regularization term that tries to ensure that posterior distribution is close to prior.
 In order to be able to backpropagate through the sampling step, one uses a trick called the reparametrization trick. For example, when $q_\phi(\mathbf{z} \vert \mathbf{x})$ is a Gaussian with mean $\mu(\mathbf{x} ; \theta)$ and a covariance matrix (typically diagonal) $\Sigma(\mathbf{x}; \theta)$, to sample $\mathbf{z} \sim \mathcal{N}(\mathbf{\mu, \Sigma})$, one redefines $\mathbf{z} = \mu + \Sigma^{1/2} \epsilon$
where $\epsilon \sim \mathcal{N}(0, I)$. Separating out the noise $\epsilon$, enables us to backpropagate through the $\mu$ and $\Sigma$ which depend on the parameters of the network.

 Giving a higher weight to the second term leads to posterior becoming independent of $\bf{x}$ (known as the posterior collapse problem\cite{postcollapse}).
 In practice one often weighs the KL divergence term by a hyperparameter $\beta$ to control the effect of regularization term \cite{betavae}, producing the overall cost function as: 
\begin{equation}
\E_{q_{\phi}(\mathbf{z}\vert \mathbf{x})} \left(\log p_{\theta}(\mathbf{x} \vert \mathbf{z})\right) -  \beta KL (q_{\phi}(\mathbf{z}\vert \mathbf{x}) \vert \vert p(\mathbf{z})))
\end{equation}
where $\beta$ is a hyperparameter that indicates the relative importance of the regularization term with respect to reconstruction term.

\subsection{Quantum computation and quantum machine learning}
Quantum computing is a model of computation with a potential to provide a speedup over its classical counterpart. 
The fundamental units in quantum computing framework consist of  quantum bits or \textbf{qubits}. A qubit can take value as a unit vector in two dimensional complex Hilbert Space $\mathbb{C}^2$. The basis states \{$\ket{0}$ and $\ket{1}$\} correspond to the classical bits \{0,1\}. An arbitrary state $\ket{\psi}$ can be considered as a superposition of the basis states $\ket{\psi} = \alpha \ket{0} +\beta \ket{1}$, where $\vert \alpha\vert ^2+\vert \beta\vert^2=1 $.

More generally, an $n$ qubit state lies in a Hilbert space spanned by basis states corresponding to $2^n$ classical bit strings, $\ket{0...0}$ through $\ket{1...1}$.
An arbitrary state can be a unit vector spanned by the $2^n$ basis strings as $\ket{\psi} = \sum_{x \in \{0,1\}^n} \alpha_x \ket{x}$, where $\sum_x \vert \alpha_x \vert^2 =1$.

If there are $n$ qubits each in state $\{\ket{\psi_i}\}_{i=1}^m$, then the overall state of the n qubit system (product state) is $\otimes_{i=1}^n \ket{\psi_i}$.

However there are $n$ qubits states that cannot be expressed in this form and are refered to as entangled states. The measurement of these states yields correlated bit strings.

The basic operations on qubits include quantum gates, which are unitary operators acting on one or more qubits. Common examples of single qubits gates include the Hadamard, Pauli Gates(X,Y,Z), and S gate.

The exponentiated Pauli Gates also provide us a set of parameterized gates, $R_x(\theta) = exp(- i \frac{\theta}{2} X)$ , $R_y(\theta) = exp(- i \frac{\theta}{2} Y)$ and $R_z(\theta) = exp(- i \frac{\theta}{2} Z)$.
A 2 qubit gate can be described by a 4 dimensional unitaries. A common 2 qubit gate is $CNOT$ which acts as $CNOT \ket{x,y} = \ket{x, x\oplus y}$ on basis states. 
A quantum circuit consists of a sequence of single qubit and two-qubit gates acting on an initial state (typically $\ket{0...0}$) followed by measurement of the final state.
A measurement of a state $\ket{\psi} = \sum_x \alpha_x \ket{x}$ yields one of the bit strings $x$ with probabilities $\vert \alpha_x\vert ^2$.

A parameterized quantum circuit (PQC) is a quantum circuit that has some gates that depend on parameters $\theta$. The unitary operator of the PQC is denoted by $\hat{U}(\theta)$, which can produce a state $\ket{\psi(\theta)} = \hat{U} (\theta) \ket{0}^{\otimes n}$. The PQC can also be conditioned on an input $\mathbf{x}$ as $\hat{U}(\mathbf{x}, \theta)$. This can be decomposed into a feature map $\hat{U}_\phi(\mathbf{x})$ and a trainable ansatz $\hat{V}(\theta)$. 
The feature map is a data encoding circuit that transforms input data $\bar{x}\in\mathbb{R}^n$ into a quantum state using 
single qubit and two qubit gates, potentially parameterized by the input variables.
The ansatz is a parameterized circuit that consists of alternating rotation layers and entanglement layers. The rotation layers are single-qubit gates applied on all qubits. The entanglement layer uses two-qubit gates to entangle the qubits according to a predefined scheme. The parameters of the PQC can be optimized to achieve a certain goal, such as minimizing the energy of a quantum system or maximizing the accuracy of a machine learning task. The parameters can be updated iteratively using classical optimization methods such as gradient-based (ADAM~\cite{kingma2014adam}, SPSA~\cite{spall1998implementation}) or gradient-free (COBYLA) algorithms. The measurement distribution of $\ket{\psi(\mathbf{x}, \theta)} = \hat{U}(\mathbf{x},\theta) \ket{0}$ in Z-basis defines the conditional distribution $p(\mathbf{y}\vert \mathbf{x}) = \vert \braket{y| \psi(\mathbf{x},\theta)} \vert^2$  $\forall y \in \{0,1\}^n$.

A variational quantum algorithm (VQA) is then commonly defined as a hybrid quantum-classical algorithm that utilizes a PQC to optimize a cost function~\cite{Bharti2022}. The cost function can be based on the sampled measurement distribution of the state (QML, or our work), or computing the expectation value of a Hamiltonian (for quantum chemistry problems). In the context of Quantum ML, this model is also called a quantum neural network. Moreover, a quantum circuit born machine (QCBM) is a generative quantum neural network that uses a PQC to represent the probability distributions over bit strings as the measurement distribution of the quantum state $\ket{\psi(\theta)} = \hat{U}(\theta) \ket{0}^{\otimes n} $. The parameters of the model can be trained by minimizing the cross entropy loss with the generated samples. After training, the model generates samples from the desired distribution.

\subsection{Prior work in Learning distributions of quantum states}
\label{sec:prior-work}
In this section, we review prior work on learning the measurement distribution of quantum states, i.e., given samples from measuring an $n$-qubit quantum state 
\begin{align}
	\ket{\psi} = \sum_{x \in \{0,1\}^n}\alpha_x \ket{x},
\end{align}
with probabilities $p(x) = |\alpha_x|^2$ for each bitstring $x$, we want to find a model that approximates the distribution with parameters $\theta$ and yields a distribution $p_{\theta}(x)$. This problem can have potential applications in quantum state compression, quantum state transfer, and quantum state discrimination.
Previous works have shown that classical generative models such as restricted Boltzmann machines (RBMs), variational autoencoders (VAEs), and autoregressive models can model such distributions but require an exponential number of parameters to learn and generate samples~\cite{carleo2017solving, rocchetto2018learning, niu2020learnability}. Moreover, a recent work based on the probably approximately correct (PAC) framework has shown that these distributions can be efficiently learned with quantum resources, but not with purely classical approaches~\cite{hinsche2021learnability}. However, the quantum learner proposed in that work assumes the availability of a fault-tolerant quantum computer.

There have been some previous works on employing quantum algorithms and the variational autoencoder framework for different problems using classical data. The first proposed algorithm within the VAE framework utilized the annealing-based framework~\cite{khoshaman2018quantum}. A recent work focused on improving the latent space representation of classical VAEs through parameterized quantum circuits (PQCs)~\cite{9799154}. Only recently, PQCs within the VAE framework were proposed for the problem of drug discovery~\cite{li2022scalable}. To the best of our knowledge, this is the first work to consider quantum VAEs for data generation from quantum states.

We propose a quantum-enhanced variational autoencoder (QeVAE) that can learn the measurement distribution of an unknown quantum state using noisy quantum devices. Our QeVAE can reconstruct the measurement distribution through an iterative learning process that involves a parameterized quantum circuit (PQC) as the generative model and a classical neural network as the inference model. We also show that our QeVAE reduces to a quantum circuit born machine (QCBM) in the zero latentsize limit. We hypothesize that our QeVAE can leverage the quantum properties of superposition and entanglement to learn the complex and high-dimensional measurement distributions of quantum states.
\section{Quantum-enhanced Variational Autoencoders}
	\label{sec:QeVAE}
In this section, we define a quantum-enhanced variational autoencoder (QeVAE) to model the measurement distribution of an unknown $n$-qubit quantum state. As a baseline, classical VAEs have already been used to model such a distribution \cite{rocchetto2018learning}. The hybrid model that we propose consists of a feedforward classical encoder, a continuous latent space, and a parametrized quantum circuit as a decoder. We model the approximate posterior (encoder network) $Q_{\phi} (\mathbf{z} \vert \mathbf{x})$ through a classical feedforward neural network and the latent variable as $\mathbf{z} \sim \mathcal{N} (0,I)$. The likelihood (generator) distribution $p_{\theta} (\mathbf {x} \vert \mathbf{z})$ is defined via a quantum circuit i.e. $p_{\theta} (\mathbf{x} \vert \mathbf{z}) = \vert \braket{x | \hat{U}(\theta,\mathbf{z}) |0\rangle^{\otimes n}} \vert ^2$. The Evidence Lower bound loss (ELBO) is optimized through a classical optimizer such as ADAM. The model is trained to mimic the given measurement distribution of states. During training, the parameters of the encoder and the rotation gates in the decoder (with a pre-selected entanglement type) are iteratively varied and learned. Such a model has multiple applications: It will enable scientists to generate certain quantum states in different physical quantum computers just by knowing the set of rotation and entangling gates to perform. The algorithm also has applications in state compression and transferring a state from one system to another upto a phase (entire phase information can be learnt with additional models\cite{rocchetto2018learning}). We now present a theorem that allows us to mathematically characterize the class of distributions that can be obtained via the above model.
	
	\begin{theorem}
		Consider a latent variable model with $\mathbf{z} \sim p(\mathbf{z})$ and $p(\mathbf{x} \vert \mathbf{z}) = \vert \bra{\mathbf{x}}   V_\theta U_\phi(\mathbf{z})  \ket{0^n} \vert^2 $, where $U_{\phi}(\mathbf{z})$ is a feature map and $V_{\theta}$ is a parameterized ansatz.
		Then \begin {enumerate}
		\item $\exists$ a density matrix $\rho$, such that $p\mathbf(x) = \bra{x} V_{\theta} \rho V_{\theta}^{\dagger} \ket{x}$. In other words the distribution of $\mathbf{x}$ can be obtained by evolving a density matrix $\rho$ under unitary ansatz  $V_\theta$ followed by measurement in standard basis.
		\item Conversely, for each density matrix $\rho$, there exists a prior $p(\mathbf{z})$ and a feature map $U_\phi(\mathbf{z})$, such that $p\mathbf(x) = \bra{x} V_{\theta} \rho V_{\theta}^{\dagger} \ket{x}$.
	\end{enumerate}
    \label{thm1:density-matrix-vae}
\end{theorem}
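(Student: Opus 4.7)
The plan is to unwind $p(\mathbf{x})$ as an expectation over $\mathbf{z}$ and recognize the resulting average of pure-state projectors as a density matrix, which gives claim (1) essentially by inspection; then for claim (2) invert this by using the spectral decomposition of $\rho$ to construct a discrete prior and a feature map that realize the desired ensemble.

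For part (1), I would start by writing
\begin{equation}
p(\mathbf{x}) = \int p(\mathbf{z})\, p(\mathbf{x}\mid \mathbf{z})\, d\mathbf{z} = \int p(\mathbf{z})\, \bigl|\bra{\mathbf{x}} V_\theta U_\phi(\mathbf{z}) \ket{0^n}\bigr|^2 d\mathbf{z} .
\end{equation}
Letting $\ket{\psi_{\mathbf{z}}} = U_\phi(\mathbf{z}) \ket{0^n}$, the squared amplitude becomes $\bra{\mathbf{x}} V_\theta \ket{\psi_{\mathbf{z}}}\bra{\psi_{\mathbf{z}}} V_\theta^\dagger \ket{\mathbf{x}}$, and pulling the $V_\theta$ and $\bra{\mathbf{x}}, \ket{\mathbf{x}}$ outside the integral yields $p(\mathbf{x}) = \bra{\mathbf{x}} V_\theta \rho V_\theta^\dagger \ket{\mathbf{x}}$ with
\begin{equation}
\rho := \int p(\mathbf{z}) \ket{\psi_{\mathbf{z}}}\bra{\psi_{\mathbf{z}}}\, d\mathbf{z} .
\end{equation}
A brief check shows $\rho$ is Hermitian, positive semidefinite (as a convex mixture of rank-one projectors with non-negative weights), and has unit trace (since $\mathrm{Tr}\ket{\psi_{\mathbf{z}}}\bra{\psi_{\mathbf{z}}}=1$ and $\int p(\mathbf{z})\,d\mathbf{z}=1$), so it is a valid density matrix.

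For part (2), given an arbitrary density matrix $\rho$, I would take its spectral decomposition $\rho = \sum_{i=1}^{2^n} \lambda_i \ket{\phi_i}\bra{\phi_i}$ with $\lambda_i \ge 0$ and $\sum_i \lambda_i = 1$. Choose $\mathbf{z}$ to be a discrete latent variable with support $\{1, \ldots, 2^n\}$ and prior $p(\mathbf{z}=i) = \lambda_i$. For the feature map, define $U_\phi(\mathbf{z}=i)$ to be any unitary that maps $\ket{0^n}$ to the eigenvector $\ket{\phi_i}$ (for instance, complete $\ket{\phi_i}$ to an orthonormal basis and take the corresponding basis-change unitary). Substituting back reproduces $\rho$ exactly, and then the computation from part (1) shows $p(\mathbf{x}) = \bra{\mathbf{x}} V_\theta \rho V_\theta^\dagger \ket{\mathbf{x}}$.

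I do not anticipate a serious obstacle here: both directions rest on the elementary fact that density matrices are exactly the convex combinations of pure-state projectors. The only small subtlety is notational, namely handling $p(\mathbf{z})$ uniformly whether the latent space is continuous (integrals) or discrete (sums), and being explicit that the constructed $U_\phi$ in part (2) need only agree with the required action on $\ket{0^n}$, which leaves ample freedom and requires no fine-tuning of the ansatz $V_\theta$.
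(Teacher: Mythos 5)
Your proposal is correct and follows essentially the same route as the paper's own proof: expanding $p(\mathbf{x})$ as an average of $\bra{\mathbf{x}}V_\theta\ket{\psi_{\mathbf{z}}}\bra{\psi_{\mathbf{z}}}V_\theta^\dagger\ket{\mathbf{x}}$, pulling $V_\theta$ out by linearity to define $\rho=\int p(\mathbf{z})U_\phi(\mathbf{z})\ket{0^n}\bra{0^n}U_\phi(\mathbf{z})^\dagger\,d\mathbf{z}$, and for the converse using the spectral decomposition of $\rho$ to build the prior and feature map. Your verification of the density-matrix properties is in fact slightly more careful than the paper's (which contains a typo, $\mathrm{Tr}(\rho)=I$ instead of $\mathrm{Tr}(\rho)=1$), and your remark about handling discrete versus continuous latent spaces is a reasonable clarification, but there is no substantive difference in approach.
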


\begin{proof}
	\begin{enumerate}
		\item
		
		\begin{align*}
			p(\mathbf{x}) &= \int p(\mathbf{x}, \mathbf{z}) d \mathbf{z}\\
			&=\int p(\mathbf{z}) \vert \bra{\mathbf{x}}   V_\theta U_\phi(\mathbf{z})  \ket{0^n} \vert^2   d \mathbf{z} \\
			&=\int p(\mathbf{z})  \bra{\mathbf{x}}   V_\theta U_\phi(\mathbf{z})  \ket{0^n} \bra{0^n}   U_\phi(\mathbf{z})^{\dagger} V_\theta^{\dagger}   \ket{\mathbf{x}}    d \mathbf{z} \\
			&=  \bra{\mathbf{x}}   V_\theta (\int p(\mathbf{z}) U_\phi(\mathbf{z})  \ket{0^n} \bra{0^n}   U_\phi(\mathbf{z})^{\dagger} d \mathbf{z} )V_\theta^{\dagger}   \ket{\mathbf{x}}     \\
			 &\text{ ( By linearity of inner product) }
		\end{align*}

		We define $\rho \coloneqq \int p(\mathbf{z}) U_\phi(\mathbf{z})  \ket{0^n} \bra{0^n} U_\phi(\mathbf{z})^{\dagger} d \mathbf{z}$ \\
		\textbf{Note} \begin {enumerate} \item $\rho$ is a valid density matrix i.e. $\rho^{\dagger} = \rho$,  $\rho \geq 0$ and $Tr(\rho) = I$ 
		\item $\rho$ is independent of both $\mathbf{x}$ and $\mathbf{z}$.
	\end{enumerate}
	
	Then, \begin{align*}
		p(\mathbf{x}) & = \bra{\mathbf{x}}   V_\theta \rho V_\theta^{\dagger}   \ket{\mathbf{x}} 
  \end{align*}
  
	Thus, such the $p(x)$ is equivalent to preparing a density state $\rho$, evolving under unitary $V_{\theta}$, and performing a measurement in standard basis. 
	
	\item For the second part, given a density matrix $\rho$, consider the sepctral decomposition of $\rho$ as $\rho = \sum_z \lambda_z \ket{\psi_z} \bra{\psi_z}$.
	Since $\rho$ is a valid density matrix, $\lambda_z$'s define a distribution $p(\mathbf{z}) = \lambda_z$. Now, for each $\mathbf{z}$, choose a unitary such that $U_{\phi(\mathbf{z})} \ket{0} = \ket{\psi_z}$. Now, one can verify that such a feature map satisfies the required equation.
\end{enumerate}
\end{proof}
Moreover, such a distribution defined using PQCs can encode information about the entanglement structure of the states via the entangling ansatz $V_\theta$ and density matrix $\rho$ which is inaccessible to purely classical models. Note that this fully includes the set of QCBM distributions since setting $\rho = \ket{0} \bra{0}$, will enable us to get QCBMs.


\section{Methods}
\label{sec:methods}
\begin{figure}
	\centering
	\includegraphics[width=\linewidth]{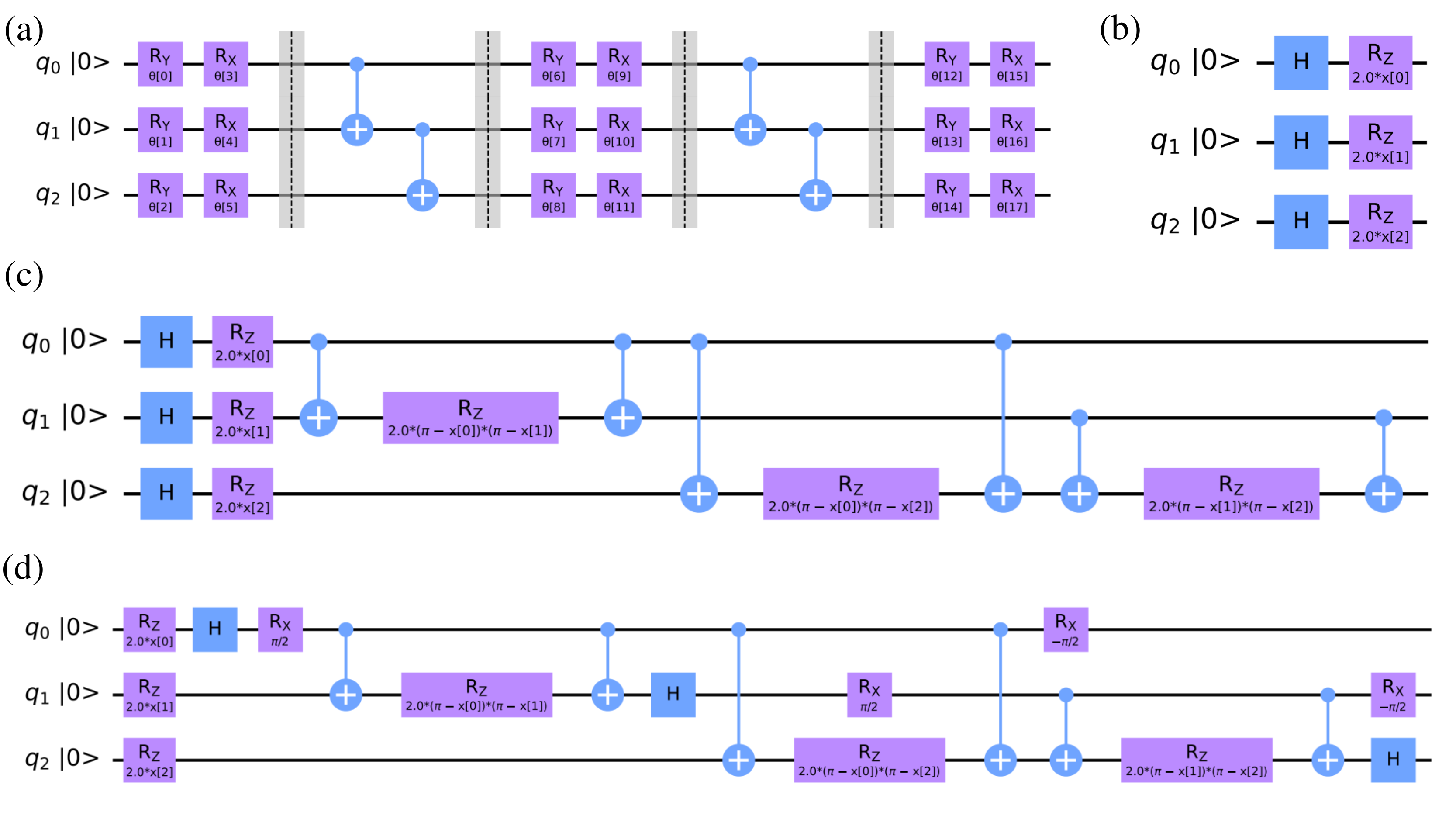}
	\caption[Ansatz and Feature maps for learning quantum distributions]{\textbf{Ansatz and Feature maps for learning quantum distributions} (a) A two-local ansatz on three qubits with two repeating layers of $Rx$ and $Ry$ gates along with linear entanglement (b) A Pauli-Z feature map that embeds a three-dimensional vector. `H' represents the Hadamard gate. (c) A Pauli-ZZ feature Map (d) A Pauli-(X, XY) feature map on a three qubit system.}
	\label{fig:ansatz-featuremaps}
\end{figure}
In this section, we provide the methods followed to setup a hybrid quantum-classical neural network that follows the variational autoencoder framework. We provide details on the architecture of the model, the training algorithm, metrics followed, datasets utilized and end the section with the hyperparameters used for various models.

\subsection{Architecture Details}

The hybrid quantum-classical neural network consists of three components: An encoder, a continuous latent-space and a decoder. In section we discuss the structure of each component.

The \emph{encoder} is modeled through a classical network which defines the mean $\mu(\mathbf{x}, \theta)$, and diagonal covariance $\Sigma(\mathbf{x}, \theta)$ for the posterior distribution $Q(\mathbf{z} \vert \mathbf{x})$, modeled as a multivariate Gaussian.
In our experiments, we use feed-forward neural networks where the input bitstring is transformed through successive operations of linear layers each followed by a non-linearity. Here, one starts with $h^{0}=x$, which is the initial input. The $l^{th}$ layer transforms this as $h^{l}= f^{l}(h^{l-1})= \phi(W^{l} h^{l-1}+b^l)$, where $\Phi$ is the Leaky ReLu activation function and $\{W^l, b^l\}$ are the parameters of the $l^{th}$ layer of the neural network (weights, biases). The final layer gives us the vectors corresponding to mean $\mu$ and the diagonal entries of log covariance matrix. Since we benchmark the performance of classical VAEs with QeVAEs, the architecture of the encoder is constant over all simulations. The encoder consists of 2 hidden layers each containing 8 and 7 neurons with the leak of the activation function set to 0.01. The output of the encoder is propagated to the latent space that is modelled as a continuous gaussian with diagonal covariance. The encoded sample is obtained from the latent space through the reparametrization trick. 

The \emph{decoder} network in the QeVAE is modeled via a variational quantum circuit with trainable rotation gates. We use a Pauli feature map (such as the Z or ZZ featuremap given in figure \ref{fig:ansatz-featuremaps}) to encode the sampled latent variable $\mathbf{z}$ onto the circuit. The Pauli featuremap is a data encoding circuit that transforms input data $\bar{x}\in\mathbb{R}^n$ $U_{\Phi(\vec{x})} \ket{0}=\exp\left(i\sum_{S\subseteq [n]} 
\phi_S(\vec{x})\prod_{i\in S} P_i\right) \ket{0}$ \cite{havlivcek2019supervised}. The variable $P_i \in \{X, Y, Z, I\}$ denotes the Pauli matrices. The index $S$ describes connectivities between different qubits in a given circuit. 
If $S=\{Z\}$, we obtain a $Z$ feature map. For $S=\{Z, ZZ\}$ we have ZZ feature map.

We then use a two-local of ansatz with linear entanglement due to their compatibility with current-era hardware. The two-local circuit is a parameterized circuit consisting of alternating rotation layers and entanglement layers. The rotation layers are single qubit gates applied on all qubits while the entanglement layer uses two-qubit gates to entangle the qubits according to a predefined scheme. In addition, while training, we find it useful to add an additional linear feedforward layer before the quantum circuit, after sampling a latent vector. This provides two benefits: (a) Flexibility in choosing a latent size. The layer can linearly transforming the latent vector of a different size to fit the input requirements of the quantum circuit; (b) This also adds power to the network by introducing additional parameters. 

To benchmark the performance of the QeVAE, we train classical VAEs on different 4 qubit and 8 qubit states, and different number of layers. The number of trainable parameters is kept fixed close to $2^{\text{No. of qubits}}$. Table \ref{tab:cvae-parameters} shows the sizes of classical decoder layers for circuits corresponding to 8 qubit measurements. 
\begin{table}[h!]
	\caption{No of neurons and number of hidden layers for different classical VAE benchmarks for 8 qubit states}
	\begin{tabular}{|c|c|l|}
	\hline
	\#Hidden layers & \# parameters & \#Neurons in each layer \\ \hline
	1 & 256 & 8,15,8 \\ 
	2 & 269 & 8,10,9,8 \\ 
	3 & 255 & 8,6,7,9,8 \\ 
	4 & 266 & 8, 7, 6, 6, 7, 8 \\ 
	5 & 258 & 8, 7, 5, 4, 5, 7, 8 \\ \hline
	\end{tabular}
	\label{tab:cvae-parameters}
\end{table}
\subsection{Training details}
We train both the classical and hybrid models using ADAM optimizer. The training is iterative where each iteration involves forward propagating a single bitstring (`x') (obtained from the distribution) through the encoder, obtaining a sample from the latent space (`z'), preprocessing the sample through a linear layer, embedding onto the PQC, propagating through the quantum circuit, and computing the measurement distribution with multiple shots. The loss function utilizes the quasi-probability of $p(x|z)$ and the KL divergence of the latent space. To avoid overfitting on the training data, we perform early stopping on the validation set loss with a patience factor $\delta$. We use python packages Pytorch and the Torchconnector module in qiskit to build these hybrid models\cite{paszke2019pytorch, Qiskit}. In table \ref{tab:hyperparameters}, we summarize the various hyperparameters used in the QeVAE model. 
\begin{table}[h!]
	\caption{}
	\begin{tabular}{ |l|l|l|}
		\hline
		No & Hyperparameter & Value \\
		\hline
		1 & Feature Map & Z - ZZ Pauli map \\
		2 & Ansatz & Two-local \\
		3 & Entanglement & Linear \\
		4 & Measurement basis & Z \\
		5 & Encoder learning rate & 0.001 - 0.01 \\
		6 & Decoder learning rate & 0.001 - 0.009 \\
		7 & No of latent variables & 0 - number of qubits \\
		8 & Batchsize & 16-64 \\
		9 & KL-term weight ($\beta$) & 0.5-2 \\
		10 & Training type & Annealing, Fixed, Stepfunction \\
		11 & Patience factor ($\delta$) & 5-7 \\ \hline
	\end{tabular}
	\label{tab:hyperparameters}
\end{table}

\subsection{Metrics}
The metric we use to quantify the generated distribution is the fidelity between two discrete distributions. If $\rho$ and $\sigma$ are $n$-qubit states, we say that $\sigma$ is a good representation of $\rho$ if the fidelity F = Tr($\sqrt{\rho^{1/2}\sigma\rho^{1/2}}$) $> 1 - \epsilon$ for an $\epsilon >$ 0. Through the result of \cite{fuchs1994ensemble}, the fidelity can be expressed in terms of the probability distributions over a measurement that maximally distinguishes the two states. Thus given two random variables X, Y with probabilities $p = (p_1 , p_2 , ..p_n )$ and $q = (q_1 , q_2 , ..q_n )$, the fidelity of X and Y is defined to be the quantity:
\begin{equation}
F(X,Y) = (\sum_i \sqrt{p_i q_i} )^2
\end{equation}

where the measure $\sum_i \sqrt{p_i q_i}$  is the Bhattacharyya coefficient between the two distributions. To compute the fidelity between the original and learnt distributions, we propagate 5000 random samples from the latent space through the decoder and construct the output distribution by computing the average distribution. 
\subsection{Datasets}
\begin{figure*}[h!]
\centering
\includegraphics[width=\textwidth]{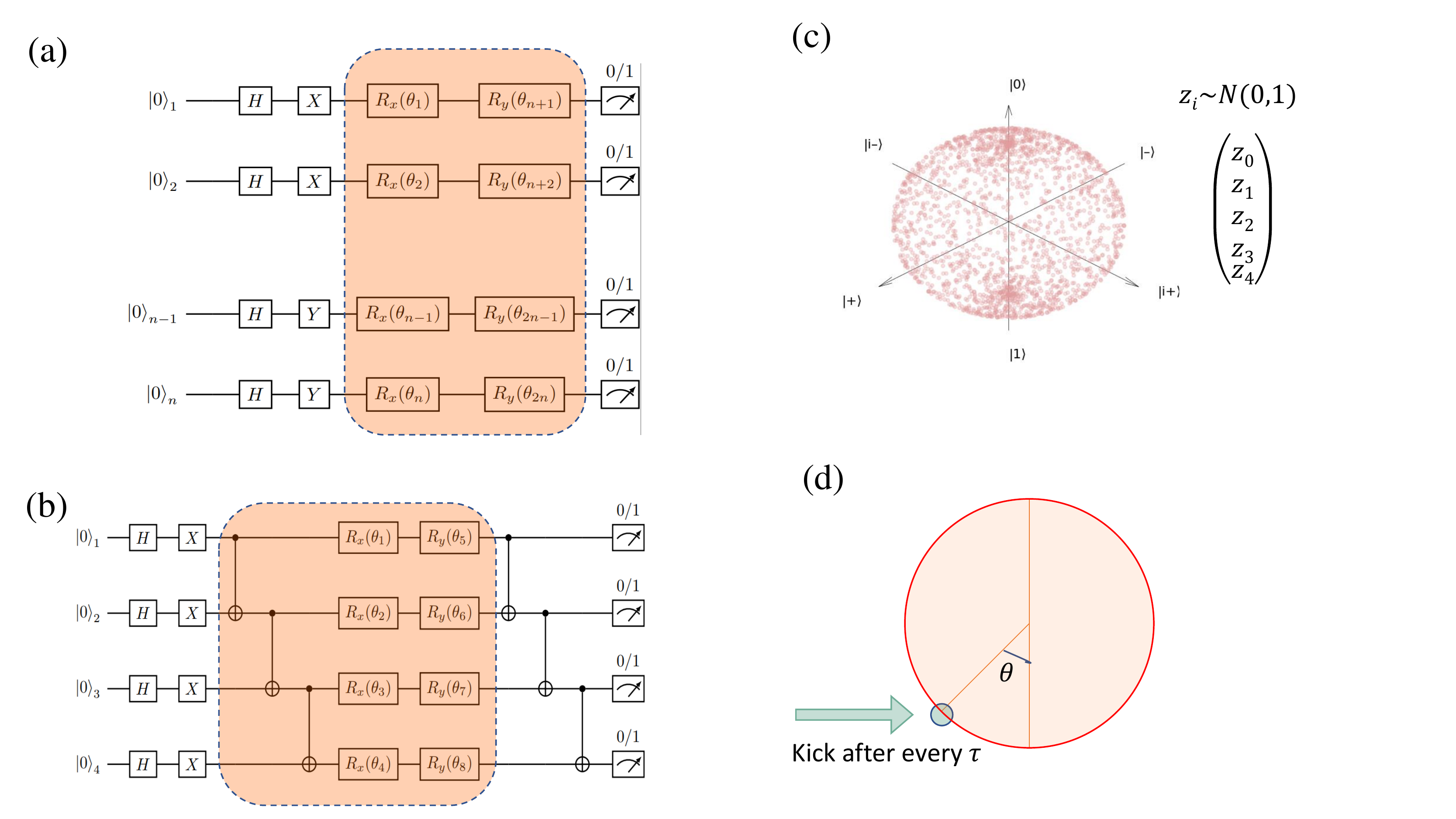}
\caption[Datasets obtained from measurement of different quantum states]{\textbf{Different types of measurement datasets}: (a) Product states obtained by a combination of arbitrary single qubit gates. The orange box represents repeatable layers of gates. (b) Quantum circuit states obtained from circuits with local (nearest-neighbor) entanglement. (c) Haar states obtained from a pure-quantum state by normalizing a $2^n$-dimensional complex vector. (d) Quantum-kicked rotor states obtained by the time-evolution of an initial state $|0\rangle$.}
\label{fig:generate-quantum-states}
\end{figure*}
\begin{figure*}[h!]
\centering
\includegraphics[width=0.9\textwidth]{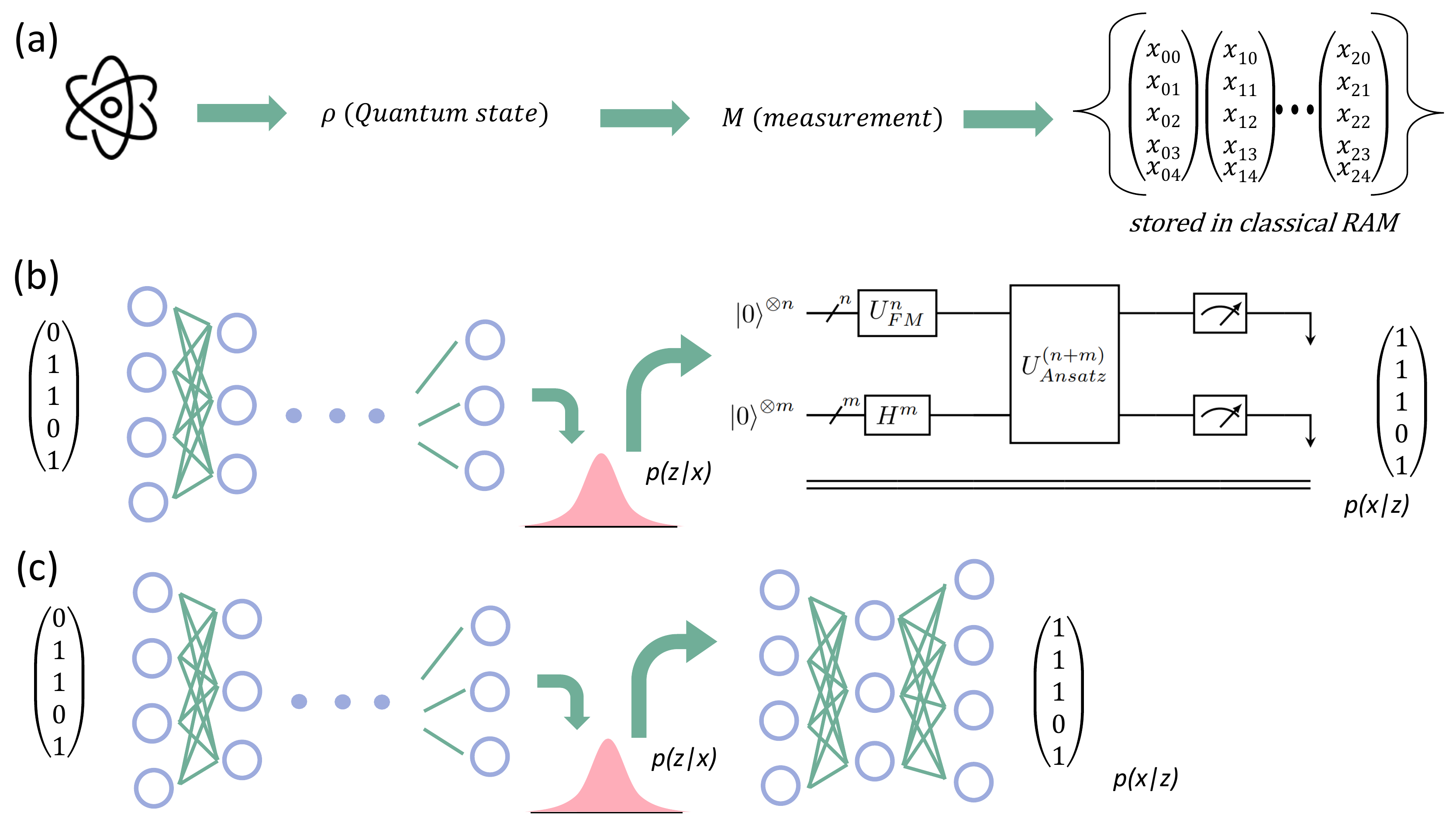}
\caption[QeVAE for learning quantum state distributions]{\textbf{QeVAE for learning quantum state distributions}: (a) Multiple copies of a quantum state $\rho$ are obtained naturally from a quantum system and are measured through different measurement operators. The measurement dataset is stored on a classical computer. (b) The QeVAE with a parameterized quantum circuit as the generative network and a classical feed-forward neural network as the inference network can be used to recrete the distribution. After training, the circuit can be used to generate the original distribution through any quantum computer and generate states amenable for downstream processing. (c) A classical VAE with continuous Gaussian latent variables that perform the same task.}
\label{fig:learn-quantum-state-method}
\end{figure*}
In order to evaluate our proposed approach, we try to model measurements from multiple families of states as described below. For each family we generate multiple 4 and 8 qubit states with different random seeds. Multiple copies of a state are measured in standard basis to yield a dataset consisting of samples in $\{0,1\}^n$ for each n qubit state. 
Here, each measurement dataset contains 1024 samples. We use 70\% samples for training and the  30\% for validation. We benchmark the performance of QeVAE on several datasets as shown in figure \ref{fig:generate-quantum-states}. We train both classical and quantum models to reproduce these distributions. We are interested in discovering if a classical learner can learn the same distribution and how the number of parameters required scales with the size of the system. In the following section, we describe the families of states that have been considered for creating our datasets. 

\emph{Random product states}: Product states (i.e. Tensor Product of single qubit states) are classically easy to simulate and are empirically found to be \emph{classically easy} to learn. We generate random product states by simulating quantum circuits with only single qubit gates with arbitrary angles of rotation, generated according to a random seed (figure \ref{fig:generate-quantum-states}(a)). The state prepared is on the form $|\psi\rangle = \otimes_{i=0}^{n-1} \left\{\alpha_i |0\rangle + \beta_i |1\rangle\right\}$, where $n$ is the number of qubits and projective Z-basis measurements generate the samples, $x \in \{0,1\}^n$. 

\emph{Haar random states}
are quantum states that are uniformly distributed over the Hilbert space according to the Haar measure and represent \emph{classically hard states} i.e, they require exponential number of parameters in the number of qubits to learn. These states can either be generated by first creating a Haar unitary $U$ and then applying it on an initial state of dimension $2^n$ or by normalizing a complex-valued vector of dimension $2^n$. We use the later method, where a complex-valued vector of $|\psi\rangle = \sum_{l=0}^{2^n-1}(c_{1l} + ic_{2l})|l\rangle$ is initialized with $|l\rangle$ corresponding to the orthonormal basis vector in the $2^n$-dimensional Hilbert space, $\mathbb{C}^{2^n}$, and $c_{1l},c_{2l}$ are real numbers chosen independently from a standard Gaussian distribution. This vector is normalized to yield a quantum state by using the constraint: $\langle \psi | \psi \rangle =1$. After normalization, the states are uniformly distributed on a unit hyper-sphere.

\emph{Random quantum circuit states} are obtained from random quantum circuits with a pre-defined entanglement structure and circuit depth, as shown in figure \ref{fig:learn-quantum-state-method}(b). These states are useful for circuit compression and circuit compilation. 

\emph{Quantum kicked rotor states} are obtained from the quantum kicked rotor (QKR), a well-known model in quantum chaos and quantum information that exhibits rich dynamics under time evolution. The QKR is defined by the Hamiltonian: 
\begin{align} 
	\hat{H} = \frac{\hat{p}^2}{2I} + k\cos{\hat{x}} \sum_n \delta(t - nT) 
\end{align} 
where $\hat{x}$ and $\hat{p}$ are the position and momentum operators, $I$ is the moment of inertia, $k$ is the kick strength, and $T$ is the kick period. We set $I=1$ and introduce dimensionless parameters $\hbar_s = \hbar T/I$ and $\kappa=k/\hbar$, where $\hbar$ is the reduced Planck's constant and $[\hat{x}, \hat{p}]=i\hbar_s$. The time-evolution operator for one period is then given by: 
\begin{align} 
	\hat{U}=\hat{U}_{kick}\hat{U}_{free}=\exp{\left(-i\kappa \cos{\hat{x}}\right)} \exp{\left(-\frac{i}{2\hbar_s} \hat{p}^2\right)} 
\end{align} 
To simulate the QKR, we apply $\hat{U}$ repeatedly to an initial state $|\psi_p(0)\rangle=0$ and perform forward and inverse Fourier transforms between each kick. The resulting wavefunction shows different behaviors depending on the value of $\kappa$. For weak kicking ($\kappa\lessapprox5.95$), the system exhibits quantum diffusion until a break time, where the variance of momentum $\langle p^2 \rangle$ grows linearly with time, and after the break time, $\langle p^2 \rangle$ saturates. For strong kicking ($\kappa\gtrapprox5.95$), the system exhibits dynamical localization, where $\langle p^2 \rangle$ reaches a finite value. In contrast, the classical kicked rotor shows chaotic behavior, where the future trajectory is highly sensitive to initial conditions. For a comprehensive review of the QKR, we refer the reader to \cite{SANTHANAM20221}. In this work, we investigate whether a generative model can learn the probability distribution of the wavefunction $|\psi_p|^2$ after 1000 kicks for different values of $\kappa \in \{0.5, 6 \}$ and $\hbar_s = 1$.


\section{Results}
\label{sec:results}

In the tables presented below, we summarize the best fidelity obtained across each type of measurement dataset. We compare the final fidelity between the target distribution and that produced by a random uniform guess, a classical variational autoencoder (CVAE), and a Quantum-enhanced variational autoencoder (QeVAE). For each type of state, we consider five different random seeds. QeVAE results include the best fidelity observed across different hyper-parameters like latent size, feature-map, prepossessing-layer, and relative KL-divergence term $\beta$.
\begin{table*}[h!]
\caption{Fidelity for Product states}
\centering
\begin{tabular}{ |l| *{6}{c} | *{6}{c}| }
	\hline
	No qubits & \multicolumn{6}{c|}{4} & \multicolumn{6}{c|}{8} \\
	\cline{1-13}
	Seed & 12 & 16 & 27 & 44 & 102 & \bf{Mean} $\downarrow$ & 12 & 16 & 27 & 44 & 102 & \bf{Mean} $\downarrow$ \\
	\hline
	Uniform & .471	& .356	& .302	& .156	& .436 & .344 & .164 & .306 & .081 &	.306 & .106 & .193\\
	CVAE & \textbf{.998} & \textbf{.995} & \textbf{.998} & \textbf{.995} & \textbf{.995} & \textbf{.996} & \textbf{.983} & \textbf{.979} & \textbf{.982} & \textbf{.983} &  \textbf{.987} & \textbf{.983} \\
	QeVAE & .995 & .827 & .882 & .973 & .892 & .914 & .875 & .947 & .870 & .823 & .957 & .894 \\
	\hline
\end{tabular}
\label{tab:product-states}
\end{table*}

\begin{table*}[h!]
\caption{Fidelity for Quantum circuit states}
\centering
\begin{tabular}{ |l| *{6}{c} | *{6}{c}| }
	\hline
	No qubits & \multicolumn{6}{c|}{4} & \multicolumn{6}{c|}{8} \\
	\cline{1-13}
	Seed & 12 & 16 & 27 & 44 & 102 & \bf{Mean} $\downarrow$ & 12 & 16 & 27 & 44 & 102 & \bf{Mean} $\downarrow$ \\
	\hline
	Uniform & .477 & .366 & .315 & .154 & .431 & .348 & .158 & .279 & .080 & .309 & .107 & .187 \\
	CVAE & .501 & .667 & .597 & \textbf{.925} & .758 & .690 & .229 & .423 & .079 & .304 & .181 & .243 \\
	QeVAE & \textbf{.981} & \textbf{.976} & \textbf{.950} & .873 & \textbf{.912} & \textbf{.938} & \textbf{.665} & \textbf{.654} & \textbf{.388} & \textbf{.548} & \textbf{.591} & \textbf{.569} \\
	\hline
\end{tabular}
\label{tab:qc-states}
\end{table*}

\begin{table*}[h!]
\caption{Fidelity for Haar random states}
\centering
\begin{tabular}{ |l| *{6}{c} | *{6}{c}| }
	\cline{1-13}
	No qubits & \multicolumn{6}{c|}{4} & \multicolumn{6}{c|}{8} \\
	\cline{1-13}
	Seed & 42 & 96 & 27 & 101 & 102 & \bf{Mean} $\downarrow$ & 12 & 43 & 16 & 27 & 2 & \bf{Mean} $\downarrow$ \\
	\hline
	Uniform & .772 & .776 & .777 & .771 & .768 & .773 & .766 & .770 & .773 & .781 & .772 & .772 \\
	CVAE & .795 & .798 & .800 & .788 & .788 & .794 & .754 & .755 & .757 & .766 & .763 & .759 \\
	QeVAE & \textbf{.839} & \textbf{.983} & \textbf{.913} & \textbf{.988} & \textbf{.932} & \textbf{.931} & \textbf{.876} & \textbf{.878} & \textbf{.887} & \textbf{.887} & \textbf{.887} & \textbf{.883} \\
	\hline
\end{tabular}
\label{tab:haar-states}
\end{table*}

\begin{table*}[h!]
\caption{Fidelity of Quantum-kicked rotor states}
\centering
\begin{tabular}{|l| ccc | ccc |}
	\cline{1-7}
	No qubits & \multicolumn{3}{c|}{4} & \multicolumn{3}{c|}{8} \\
	\cline{1-7}
	Type & Localized (k=6) & Diffusive (k=0.5)& \bf{Mean} $\downarrow$ & Localized (k=6) & Diffusive (k=0.5)& \bf{Mean} $\downarrow$ \\
	\hline
	Uniform & .175 & .838 & .506 & .053 & .418 & .236 \\
	CVAE & .723 & .908 & .815 & .061 & .406 & .233 \\
	QeVAE & \textbf{.991} & \textbf{.992} & \textbf{.991} & \textbf{.912} & \textbf{.616} & \textbf{.764} \\
	\hline
\end{tabular}
\label{tab:qkr-states}
\end{table*}

\begin{table}[h!]
\caption{Hardware results for a 4 qubit quantum circuit state}
\centering
\begin{tabular}{|l|c|c|c|c|c|c}
	\cline{1-6}
	State & Fidelity & Simulator & Hardware & Suppression & Mitigation \\
	\cline{1-6}
	Uniform & 0.477 & $\checkmark$ & & & \\
	CVAE & 0.501 & $\checkmark$ & & & \\
	QeVAE & 0.981 & $\checkmark$ & & & \\
	QeVAE & 0.658 & & $\checkmark$ &  & \\
	QeVAE & 0.642 & & $\checkmark$ & $\checkmark$ & $\checkmark$ \\
	\hline
\end{tabular}
\label{tab:hardware-results-qc12}
\end{table}

From tables(I-V), we observe that across all entangled quantum states we considered (i.e. all clases of states other than product states) the final fidelity obtained from a QeVAE outperforms the classical VAE and a random guess. In addition, the number of learnable parameters in the classical VAE is typically of the order $2^{(n)}$ (Table \ref{tab:cvae-parameters}) while those in a QeVAE is $4n+\epsilon$ where $n$ is the number of qubits and $\epsilon$ is a constant ($\epsilon <$ 4). To further validate our findings, we run the best QeVAE models on real quantum devices and see that the obtained fidelity is higher than those achieved by classical methods (Table \ref{tab:hardware-results-qc12}). We know from literature that the measurement distribution obtained from product states are classically \emph{easy} and that measurements obtained from states with quantum correlations is classically \emph{hard}. Our results not only corroborates with the above observation but also shows that quantum-enhanced classical models can overcome the drawbacks of purely classical models. Our main results and observations from data is as follows:

\begin{enumerate}
\item 
Our proposed algorithm achieves the highest fidelity across all types of datasets, other than product states. The inherent ability of our model to learn quantum correlations i.e we are able to produce entangled multi-qubit states through variational quantum circuits and tailor the rotation gates to reproduce a desired distribution allows them to outperform the classical model for the quantum circuit, haar random and kicked rotor states.
\item \emph{Fidelity for Product states}: Since product states do not employ quantum-correlations between different qubits in a many-body system, a distribution on $n$ bits obtained by measuring a product of $n$ qubits,  
can be characterized by $n$ independent Bernoulli variables, thus requiring only $n$ parameters.
This polynomial dependence and presence of only classical correlations (each bitstring in the distribution is independent) in the output distributions enables classical VAEs to efficiently learn and reproduce the distribution with very high fidelity. In addition, the ansatz in our model is not always tailored for product states, i.e the ansatz has an entanglement structure and the quantum state produced from the QeVAE will exhibit entanglement. Such a class of distribution produced cannot in general be similar to that produced by product states. This might explain the reduced fidelities for QVAEs. 
\item \emph{Fidelity for Haar, Quantum circuit and Quantum kicked rotor states}:
We notice that for measurement distributions obtained from more generic quantum states, our quantum models outperform the classical models in terms of final fidelity. In some cases (like seed 44 for quantum circuit states), the score of the classical model is higher because the resulting measurement distribution is highly concentrated around a single output ($>80\%$). In such cases, it becomes easier to just predict the output statistically, rather than learning the intrinsic structure of the quantum circuit producing the state. Nonetheless, we find that QeVAE models require only $O(n)$ rotation gates to learn the output distribution and achieve a fidelity score that is unattainable to other methods. 

\item Since QeVAE is a hybrid model, it operates through the synergy of quantum and classical resources. Classical models are well-versed at producing non-linear transformations whereas quantum models are restricted to unitary or linear transformations. Conversely, quantum models can encode quantum correlations like entanglement that inaccessible to classical methods. An important feature of our model is the ability to leverage the merits of both the classical and quantum models, and outperform both models. 

\item The QeVAE model contains many hyper-parameters that can be tuned to achieve optimal performance. In the above tables, we have presented the result for the best hyper-parameter setup. When the latent-size is set to zero, there is no contribution from the classical encoder and the output distribution is produced by the circuit alone. In such cases, the resulting model is the Quantum circuit Born Machine (QCBM), wherein the circuit parameters are iteratively updated to minimize the difference between the output and target distributions. Thus in the latent-size=0 limit, our model results in the QCBM generative model. The KL divergence term in the ELBO loss function can be neglected, and minimizing the negative expected log-likelihood becomes equivalent to minimizing the KL-divergence between the output and target distributions.

\end{enumerate} 

\textbf{Hardware run}:
We execute the best model trained on the simulator on IBM Hardware by transpiling the decoder circuit. To generate the output distribution, random samples from $N(0,1)$ are propagated through the preprocessor linear layer and then through the circuit (executed on hardware). An average over all initial random points yields the desired output distribution. Our results are depicted in figure \ref{fig:qcstate-hardware} and in table \ref{tab:hardware-results-qc12}. We find that final fidelity is lesser than that on a simulator. Using error mitigation and suppression techniques, we are able to perform better than the classical VAE.
\begin{figure}[h!]
\centering
\includegraphics[width=0.5\textwidth]{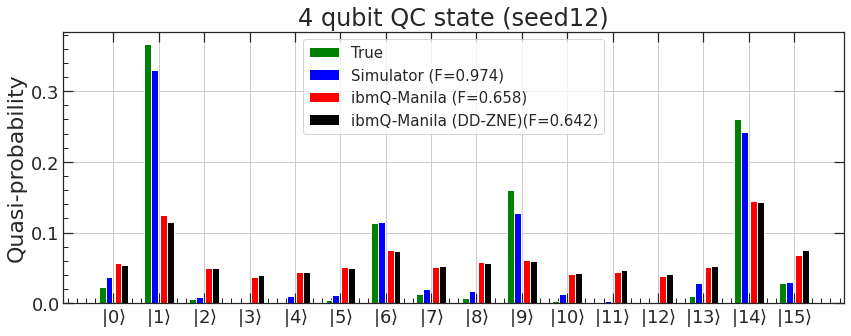}
\caption[QCstate hardware run]{\textbf{Inference on IBM hardware}: The true distribution produced by an unknown 4-qubit system (green) is used to train a QeVAE on IBM's qasm-simulator (blue) that results in a final of 0.97. After training, the decoder is executed on the IBM-Manila. The measurement distribution produced from hardware has a total fidelity of 0.658 which changes to 0.642 with error-mitigation (Zero noise extrapolation) and error-suppression (Dynamic decoupling)}
\label{fig:qcstate-hardware}
\end{figure}

\subsection{Circuit compilation}
\begin{figure*}[h!]
\centering
\includegraphics[width=0.95\textwidth]{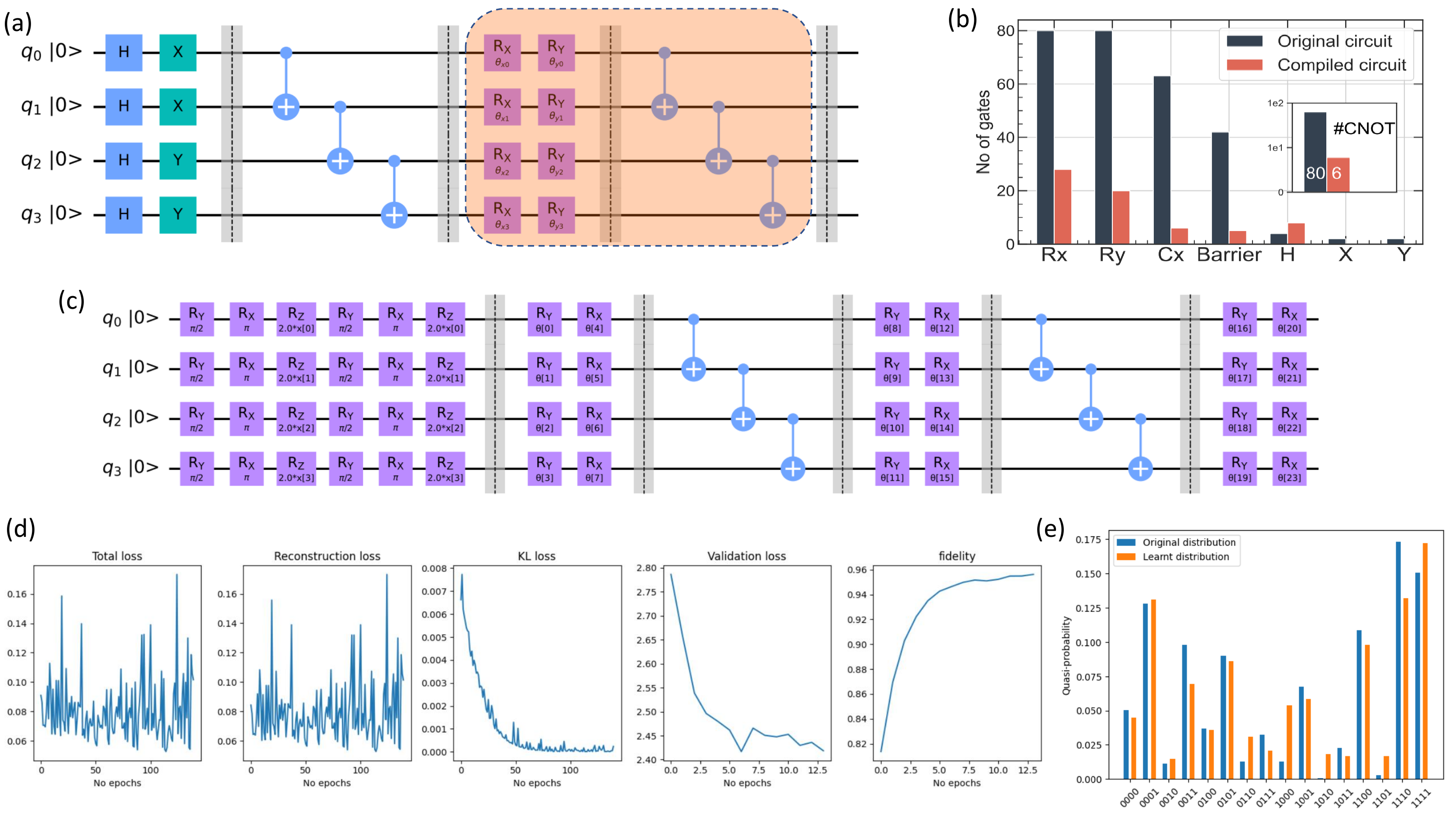}
\caption[QeVAE for circuit compilation]{\textbf{Circuit compilation with QeVAEs}: (a) Original circuit structure that produces a measurement distribution. The orange box represents a single layer of rotation and entangling layer that is repeated 20 times. (b) The compiled circuits requires very few gates when compared to the original circuit that produces the state. Particularly, the number of CNOT gates is reduced by $\sim$14X (c) The variational quantum circuit containing the Pauli-Z feature Map that embeds data x from the latent space and the ansatz with only 16 parameters. (d) Results of training the ansatz to produce the measurement distribution. We achieve a final fidelity of 0.956 (e) Original and the distribution produced by the QeVAE after training.}
\label{fig:circuit-compilation}
\end{figure*}
We now present a practical application of QeVAEs on the task of circuit compilation. Circuit compilation or circuit compression is an important area of focus in the NISQ era. Since the depth of circuits executable on hardware is limited, there is a need to transform deep circuits into shallow ones by altering the sequence of gates and reducing the overall size and complexity. Furthermore, multi-qubit
gates like the CNOT gate, T-gate, and SWAP gates are expensive to implement on hardware. Efficient circuit compilation assists in faster computations, and allows accurate simulation of quantum systems. This has immense applications in many-body physics, condensed matter physics, and quantum chemistry. Through simulation studies, we show show QeVAEs can help in reducing the complexity of circuits by learning to reproduce measurement distributions with fewer gates. 

We simulate an unknown quantum state by considering a deep quantum circuit with twenty layers of rotation and entangling gates. In reality, the form of the circuit is unknown and one only has access to the measurement data. A projective measurement on such a state produces a measurement distribution as shown in figure 4.5(e). Note that our goal here is to reproduce the measurement distribution and not to learn the original state itself. Through the QeVAE learning approach, we can learn the measurement distribution with high fidelity. After training, we can discard the encoder part of the circuit, and the decoder provides a sequence of gates that can be implemented on hardware to generate the same distribution. We achieve a final fidelity of 0.956
(figure \ref{fig:circuit-compilation}(d)) and a multi-fold reduction in the number of gates as seen in figure \ref{fig:circuit-compilation}(b). The initial circuit contains: 80 Rx gates, 80 Ry gates, 63 CNOT gates which are reduced to 28, 20, and 6 respectively, yielding an overall 4x reduction in total number of gates and 3x reduction in the number of parameters. 

\section{Conclusion and Outlook}
\label{sec:Conlcusion}
The major outcome of our work is a working hybrid quantum-classical machine learning model for generative learning and its bench-marking against purely classical and quantum models. In addition, our models are particularly suitable for quantum devices with small number of noisy qubits with limited connectivity. With this algorithm, we have investigated the ability of quantum generative models to learn the distributions obtained from the measurement of quantum many-body systems. Such distributions are known to be demanding for classical generative models and we verify the same in our experiments. We further go on to show that our proposed hybrid-model can learn these distributions with a much higher final fidelity. We find this trend to be universal across a range of different types of quantum states, from generic haar random states to dynamic kicked rotor states. We find that our model has multiple hyper-parameters to tune and in one such case, when the latent size is set to zero, we obtain at the Quantum circuit Born Machine, the most popular pure quantum-mechanical generative model. In addition, we have shown that QeVAEs can be useful for the practical task of circuit compilation.

Our work highlights the ability of quantum-enhanced models to perform better than classical models within the variational autoencoder framework. There are some drawbacks of the VAE approach and they include: (1) information loss in the encoding and decoding process that often results in blurry outputs; (2) posterior collapse: if the likelihood function is much more complex than the posterior, then the encoder ignores the input data and outputs a trivial latent space, leading the decoder to reconstruct the data from noise. Other classical generative methods like GANs, Diffusion Models, and Normalizing flows have gained much traction for producing better quality images. Although, they also contain their fair share of disadvantages like mode collapse in GANs and the large number of parameters in Diffusion models, they have been shown to produce better quality images. In future, quantized versions of such models can also be utilized to model measurement distributions.

The QeVAE algorithm contains many tunable parameters as mentioned in Table \ref{tab:hyperparameters}. In our calculations, we observed that obtaining an accurate final distribution requires careful fine-tuning of these hyperparameters. For example, we found out that initializing the preprocessing layer near $N(0,0.5)$ for our QeVAE is very essential to outperform the classical model for learning measurement distributions. In the absence of the preprocessing layer, the models quickly over-fit the training data and occasionally performed better than classical VAE. Moreover, we found that feature maps with entanglement (like the Pauli ZZ map) produced distributions with the same accuracy but require fewer epochs. The correct hyper-parameters have to be found out through an extensive grid-search, and through semi-empirical means.

In the future, work can focus on verifying our results for learning the classical distribution on diverse datasets. Particularly, text datasets can be used for discrete distributions. In addition, one can also examine if modifications in the ansatz like incorporating a non-linearity (mid-circuit measurements) can enhance the performance of the QeVAE and provide a better bound to the ELBO. Such hybrid quantum models can also be used to model the distribution of observables used in particle physics and condensed matter physics. 


\section{Acknowledgemnts}
ASR would like to thank the KVPY fellowship and IISER Pune computing resources for support. The corresponding code and datafiles are available on request.

\newpage
\bibliography{main}

\begin{thebibliography}{10}

\bibitem{Biamonte2017}
J.~Biamonte, P.~Wittek, N.~Pancotti, P.~Rebentrost, N.~Wiebe, and S.~Lloyd,
  ``Quantum machine learning,'' {\em Nature}, vol.~549, pp.~195--202, Sept.
  2017.

\bibitem{Aaronson2015}
S.~Aaronson, ``Read the fine print,'' {\em Nature Physics}, vol.~11,
  pp.~291--293, Apr. 2015.

\bibitem{shor1994algorithms}
P.~W. Shor, ``Algorithms for quantum computation: discrete logarithms and
  factoring,'' in {\em Proceedings 35th annual symposium on foundations of
  computer science}, pp.~124--134, Ieee, 1994.

\bibitem{harrow2009quantum}
A.~W. Harrow, A.~Hassidim, and S.~Lloyd, ``Quantum algorithm for linear systems
  of equations,'' {\em Physical review letters}, vol.~103, no.~15, p.~150502,
  2009.

\bibitem{grover1996fast}
L.~K. Grover, ``A fast quantum mechanical algorithm for database search,'' in
  {\em Proceedings of the twenty-eighth annual ACM symposium on Theory of
  computing}, pp.~212--219, 1996.

\bibitem{Silver2017}
D.~Silver, J.~Schrittwieser, K.~Simonyan, I.~Antonoglou, A.~Huang, A.~Guez,
  T.~Hubert, L.~Baker, M.~Lai, A.~Bolton, Y.~Chen, T.~Lillicrap, F.~Hui,
  L.~Sifre, G.~van~den Driessche, T.~Graepel, and D.~Hassabis, ``Mastering the
  game of go without human knowledge,'' {\em Nature}, vol.~550, pp.~354--359,
  Oct 2017.

\bibitem{Jumper2021}
J.~Jumper, R.~Evans, A.~Pritzel, T.~Green, M.~Figurnov, O.~Ronneberger,
  K.~Tunyasuvunakool, R.~Bates, A.~{\v{Z}}{\'{\i}}dek, A.~Potapenko,
  A.~Bridgland, C.~Meyer, S.~A.~A. Kohl, A.~J. Ballard, A.~Cowie,
  B.~Romera-Paredes, S.~Nikolov, R.~Jain, J.~Adler, T.~Back, S.~Petersen,
  D.~Reiman, E.~Clancy, M.~Zielinski, M.~Steinegger, M.~Pacholska,
  T.~Berghammer, S.~Bodenstein, D.~Silver, O.~Vinyals, A.~W. Senior,
  K.~Kavukcuoglu, P.~Kohli, and D.~Hassabis, ``Highly accurate protein
  structure prediction with {AlphaFold},'' {\em Nature}, vol.~596,
  pp.~583--589, July 2021.

\bibitem{selfdrivingcars_2022}
``Safe driving cars,'' {\em Nature Machine Intelligence}, vol.~4, pp.~95--96,
  Feb. 2022.

\bibitem{Bremner2010}
M.~J. Bremner, R.~Jozsa, and D.~J. Shepherd, ``Classical simulation of
  commuting quantum computations implies collapse of the polynomial
  hierarchy,'' {\em Proceedings of the Royal Society A: Mathematical, Physical
  and Engineering Sciences}, vol.~467, pp.~459--472, Aug. 2010.

\bibitem{fefferman2015power}
B.~Fefferman and C.~Umans, ``The power of quantum fourier sampling,'' {\em
  arXiv preprint arXiv:1507.05592}, 2015.

\bibitem{Sweke2021quantumversus}
R.~Sweke, J.-P. Seifert, D.~Hangleiter, and J.~Eisert, ``On the {Q}uantum
  versus {C}lassical {L}earnability of {D}iscrete {D}istributions,'' {\em
  {Quantum}}, vol.~5, p.~417, Mar. 2021.

\bibitem{gans}
I.~Goodfellow, J.~Pouget-Abadie, M.~Mirza, B.~Xu, D.~Warde-Farley, S.~Ozair,
  A.~Courville, and Y.~Bengio, ``Generative adversarial networks,'' {\em
  Communications of the ACM}, vol.~63, no.~11, pp.~139--144, 2020.

\bibitem{vae}
D.~P. Kingma and M.~Welling, ``Auto-encoding variational bayes,'' {\em stat},
  vol.~1050, p.~1, 2014.

\bibitem{Huang2022}
H.-Y. Huang, ``Learning quantum states from their classical shadows,'' {\em
  Nature Reviews Physics}, vol.~4, pp.~81--81, Jan. 2022.

\bibitem{Guillermo2021}
G.~Garc\'{\i}a-P\'erez, M.~A. Rossi, B.~Sokolov, F.~Tacchino, P.~K. Barkoutsos,
  G.~Mazzola, I.~Tavernelli, and S.~Maniscalco, ``Learning to measure: Adaptive
  informationally complete generalized measurements for quantum algorithms,''
  {\em PRX Quantum}, vol.~2, p.~040342, Nov 2021.

\bibitem{Rocchetto2019}
A.~Rocchetto, S.~Aaronson, S.~Severini, G.~Carvacho, D.~Poderini, I.~Agresti,
  M.~Bentivegna, and F.~Sciarrino, ``Experimental learning of quantum states,''
  {\em Science Advances}, vol.~5, Mar. 2019.

\bibitem{Huang2020}
H.-Y. Huang, R.~Kueng, and J.~Preskill, ``Predicting many properties of a
  quantum system from very few measurements,'' {\em Nature Physics}, vol.~16,
  pp.~1050--1057, June 2020.

\bibitem{rocchetto2018learning}
A.~Rocchetto, E.~Grant, S.~Strelchuk, G.~Carleo, and S.~Severini, ``Learning
  hard quantum distributions with variational autoencoders,'' {\em npj Quantum
  Information}, vol.~4, no.~1, pp.~1--7, 2018.

\bibitem{niu2020learnability}
M.~Y. Niu, A.~M. Dai, L.~Li, A.~Odena, Z.~Zhao, V.~Smelyanskyi, H.~Neven, and
  S.~Boixo, ``Learnability and complexity of quantum samples,'' {\em arXiv
  preprint arXiv:2010.11983}, 2020.

\bibitem{postcollapse}
Y.~Wang, D.~Blei, and J.~P. Cunningham, ``Posterior collapse and latent
  variable non-identifiability,'' {\em Advances in Neural Information
  Processing Systems}, vol.~34, pp.~5443--5455, 2021.

\bibitem{betavae}
I.~Higgins, L.~Matthey, A.~Pal, C.~Burgess, X.~Glorot, M.~Botvinick,
  S.~Mohamed, and A.~Lerchner, ``beta-vae: Learning basic visual concepts with
  a constrained variational framework,'' in {\em International conference on
  learning representations}, 2017.

\bibitem{kingma2014adam}
D.~P. Kingma and J.~Ba, ``Adam: A method for stochastic optimization,'' {\em
  arXiv preprint arXiv:1412.6980}, 2014.

\bibitem{spall1998implementation}
J.~C. Spall, ``An overview of the simultaneous perturbation method for
  efficient optimization,'' {\em Johns Hopkins apl technical digest}, vol.~19,
  no.~4, p.~482–492, 1998.

\bibitem{Bharti2022}
K.~Bharti, A.~Cervera-Lierta, T.~H. Kyaw, T.~Haug, S.~Alperin-Lea, A.~Anand,
  M.~Degroote, H.~Heimonen, J.~S. Kottmann, T.~Menke, W.-K. Mok, S.~Sim, L.-C.
  Kwek, and A.~Aspuru-Guzik, ``Noisy intermediate-scale quantum algorithms,''
  {\em Rev. Mod. Phys.}, vol.~94, p.~015004, Feb 2022.

\bibitem{carleo2017solving}
G.~Carleo and M.~Troyer, ``Solving the quantum many-body problem with
  artificial neural networks,'' {\em Science}, vol.~355, no.~6325,
  pp.~602--606, 2017.

\bibitem{hinsche2021learnability}
M.~Hinsche, M.~Ioannou, A.~Nietner, J.~Haferkamp, Y.~Quek, D.~Hangleiter, J.-P.
  Seifert, J.~Eisert, and R.~Sweke, ``Learnability of the output distributions
  of local quantum circuits,'' {\em arXiv preprint arXiv:2110.05517}, 2021.

\bibitem{khoshaman2018quantum}
A.~Khoshaman, W.~Vinci, B.~Denis, E.~Andriyash, H.~Sadeghi, and M.~H. Amin,
  ``Quantum variational autoencoder,'' {\em Quantum Science and Technology},
  vol.~4, no.~1, p.~014001, 2018.

\bibitem{9799154}
P.~Rivas, L.~Zhao, and J.~Orduz, ``Hybrid quantum variational autoencoders for
  representation learning,'' in {\em 2021 International Conference on
  Computational Science and Computational Intelligence (CSCI)}, pp.~52--57,
  2021.

\bibitem{li2022scalable}
J.~Li and S.~Ghosh, ``Scalable variational quantum circuits for
  autoencoder-based drug discovery,'' in {\em 2022 Design, Automation \& Test
  in Europe Conference \& Exhibition (DATE)}, pp.~340--345, IEEE, 2022.

\bibitem{havlivcek2019supervised}
V.~Havl{\'\i}{\v{c}}ek, A.~D. C{\'o}rcoles, K.~Temme, A.~W. Harrow, A.~Kandala,
  J.~M. Chow, and J.~M. Gambetta, ``Supervised learning with quantum-enhanced
  feature spaces,'' {\em Nature}, vol.~567, no.~7747, pp.~209--212, 2019.

\bibitem{paszke2019pytorch}
A.~Paszke, S.~Gross, F.~Massa, A.~Lerer, J.~Bradbury, G.~Chanan, T.~Killeen,
  Z.~Lin, N.~Gimelshein, L.~Antiga, {\em et~al.}, ``Pytorch: An imperative
  style, high-performance deep learning library,'' {\em Advances in neural
  information processing systems}, vol.~32, 2019.

\bibitem{Qiskit}
H.~Abraham, I.~Adachi, G.~Aguilar, A.~Ahmed, J.~Albert, G.~Aleksandrowicz,
  T.~Alexander, G.~Alexandrowics, J.~Aliaga, D.~Allen, V.~Altomare, C.~Amico,
  A.~Anastasiou, A.~Annovi, R.~Ansari, M.~Antonelli, M.~Antonello, D.~Arbelaez,
  F.~Arute, A.~Asfaw, A.~Azizkhanian, I.~Babushkin, L.~Baiardi,
  G.~Balint-Kurti, S.~Banerjee, P.~Barkoutsos, G.~Barron, S.~Bartlett,
  L.~Bassman, A.~Behera, B.~Behera, and P.~. Behera, ``Qiskit: An open-source
  framework for quantum computing,'' 2019.

\bibitem{fuchs1994ensemble}
C.~A. Fuchs and C.~M. Caves, ``Ensemble-dependent bounds for accessible
  information in quantum mechanics,'' {\em Physical Review Letters}, vol.~73,
  no.~23, p.~3047, 1994.

\bibitem{SANTHANAM20221}
M.~Santhanam, S.~Paul, and J.~B. Kannan, ``Quantum kicked rotor and its
  variants: Chaos, localization and beyond,'' {\em Physics Reports}, vol.~956,
  pp.~1--87, 2022.
\newblock Quantum kicked rotor and its variants: Chaos, localization and
  beyond.

\end{thebibliography}
\bibliographystyle{ieeetr}

\end{document}